\def\BibTeX{{\rm B\kern-.05em{\sc i\kern-.025em b}\kern-.08em
    T\kern-.1667em\lower.7ex\hbox{E}\kern-.125emX}}
\theoremstyle{definition}
\newtheorem{theorem}{Theorem}
\theoremstyle{definition}
\newtheorem{definition}{Definition}
\DeclareMathOperator\erf{erf}
\DeclareMathOperator\erfc{erfc}
\DeclareMathOperator\Q{Q}
\begin{document}

\title{On the Rate-Exponent Region of Integrated Sensing and Communications With Variable-Length Coding} 
%\author{Ioannis Papoutsidakis, George C. Alexandropoulos}

\author{
   \IEEEauthorblockN{Ioannis Papoutsidakis$^1$ and George C. Alexandropoulos$^{1,2}$}
\IEEEauthorblockA{$^1$Department of Informatics and Telecommunications, National and Kapodistrian University of Athens\\
Panepistimiopolis Ilissia, 15784 Athens, Greece}
\IEEEauthorblockA{$^2$Department of Electrical and Computer Engineering, University of Illinois Chicago, IL 60601, USA 
\\
e-mails: \{ipapout, alexandg\}@di.uoa.gr
}
}

\maketitle
\begin{abstract}
This paper considers the achievable rate-exponent region of integrated sensing and communication systems in the presence of variable-length coding with feedback. This scheme is fundamentally different from earlier studies, as the coding methods that utilize feedback impose different constraints on the codewords. The focus herein is specifically on the Gaussian channel, where three achievable regions are analytically derived and numerically evaluated. In contrast to a setting without feedback, we show that a trade-off exists between the operations of sensing and communications. 
\end{abstract}

\begin{IEEEkeywords}
Integrated sensing and communications, feedback, rate-exponent region, variable-length coding.
\end{IEEEkeywords}

\section{Introduction}
The trade-off of sensing and communications in systems that jointly offer these services has been an important field of study in recent years~\cite{6Gdisac}. These so-called Integrated Sensing and Communications (ISAC) systems, in most cases, cannot simultaneously maximize the transmission rate and the sensing capability~\cite{FD_MIMO_ISAC}. This fact is highlighted in one of the initial works that study the capacity-distortion region of ISAC for memoryless channels~\cite{8437621}. The understanding of this trade-off is further expanded by subsequent works that focus on multi-terminal settings~\cite{8849242,9785593}. 

Interestingly, there exist scenarios where an underlying trade-off between the sensing and communication operations does not exist. For instance, the authors of \cite{9731514} demonstrated an ISAC paradigm that avoids compromising  between these two services. It was specifically shown that one can simultaneously maximize the achievable rate and the detection error exponent in the presence of additive Gaussian noise. The assumption that enables this optimal performance is the use of constant-power, fixed-length codewords, which are both capacity-achieving and optimal for sensing.

Nevertheless, there are settings with additive Gaussian noise that naturally do not allow the use of constant-power codewords, such as hybrid automatic repeat request (HARQ) with incremental redundancy \cite{9471818} or variable-length coding with feedback \cite{5961844}. These transmission settings utilize codewords that are virtually infinite in length and the transmission is terminated when a requirement is achieved, such as a threshold on the probability of error. The only way to transmit information with symbols of constant power $\alpha$ at each channel use is with the use of the symbols $\{-\sqrt{\alpha},\sqrt{\alpha}\}$ of a binary Pulse-Amplitude Modulation (PAM) scheme, which is sub-optimal for communication over the Gaussian channel. Thus, there exists an apparent trade-off to be studied in this setting.

In this paper, we introduce three achievable rate-exponent regions for an ISAC setting with variable-length coding with feedback. In contrast to previous works that coarsely approximate their derived regions using optimization algorithms, such as the Blahut-Arimoto algorithm \cite{8437621,9785593,10471902}, we focus on well defined distributions that allow for analytical derivations. The numerical evaluation of our analytic framework showcases for the first time that the advantage offered by the considered variable-length codes comes at the cost of a reduced achievable region when sensing is also considered.

\textit{Notation:} Special functions and distributions are introduced and explained as needed, and all logarithms are base $2$. The Gaussian distribution with mean $\mu$ and variance $\sigma^2$ is denoted as $\mathcal{N}(\mu,\sigma^2)$. $\mathbb{E}[\cdot]$ represents the expectation of a random variable, while ${\rm Pr}[\cdot]$ indicates the probability. Calligraphic letters indicate sets, e.g., $\mathcal{X}$.

%The rest of the paper is organized as follows. Section \ref{probset} presents the problem setting and Section \ref{dter} focuses on the analysis of the sensing capability of the system. The extreme cases of the rate-exponent region are discussed in Section \ref{cornpoi}. The main results of the paper are presented in Section \ref{con1} and Section \ref{con2}. Numerical evaluations are provided in Section \ref{neval} and the final conclusions are drawn in Section \ref{concl}.

%\textit{Notation:}

\section{System Model and Analysis Objective}
\label{probset}
We consider the setup illustrated in Figure \ref{fig1} comprising an encoder, a decoder, and a detector. The aim of the detector is to estimate the multiplicative state $S$ using side information about the transmitted signal $X^n$ (e.g., a monostatic setting with full duplex~~\cite{FD_MIMO_ISAC}), and the aim of the decoder is to estimate the transmitted information $W$. The state remains unchanged during each codeword transmission and the codeword blocklength is denoted with $n$. For $i=1,\ldots,n$, the decoder and detector observe respectively the following channel outputs:
\begin{align}
\tilde{Y}_i &= X_i+\tilde{Z}_i,\label{eq:Yi}\\
Y_i &= SX_i+Z_i. \label{eq:tilde_Yi}
\end{align}
We assume for our analysis that the additive noise $Z_i$ is independent of $\tilde{Z}_i$, and that $Z_i\sim\mathcal{N}(0,\sigma^2)$ and $\tilde{Z}_i\sim\mathcal{N}(0,\tilde{\sigma}^2)$. In both cases, the noise terms are independent and identically distributed over the channel transmissions.
\begin{figure}
\centering
 \scalebox{0.7}{\begin{tikzpicture}[auto, node distance=2cm,>=latex']

    % Nodes
    \node (anchor) {};
    \node [draw, rectangle,below of=anchor, minimum height=2em, minimum width=4em] (encoder) {Encoder};
    \node [right of=encoder] (anchor2){};
    \node [above right of=anchor2,node distance=1cm] (anchor3){};
    \node [draw, circle, below right of=anchor2, node distance=1cm] (mult) {$\cdot$};
    \node [draw, circle, right of=mult, node distance=2cm] (sum2) {+};
    \node [draw, rectangle, minimum height=2em, minimum width=4em, right of=sum2, node distance=2cm] (detector) {Detector};
    \node [draw, circle,right of=anchor3, node distance=2cm] (sum1) {+};
    \node [draw, rectangle, minimum height=2em, minimum width=4em, right of=sum1, node distance=2cm] (decoder) {Decoder};

    % Arrows
    \draw [] (encoder) -- node {$X^n$} (1.8,-2);
    \draw [->] (sum1) -- node {} (decoder);
    \draw [->] (mult) -- (sum2);
    \draw [->] (sum2) -- (detector);
    \draw [->] (encoder.east) --++(30pt,0pt) |- (mult);    
    \draw [->] (encoder.east) --++(30pt,0pt) |- (sum1);    
    \draw [->] (decoder.north) --++(0pt,40pt) --++(-191pt,0pt) -- (encoder.north) ;

    % Labels
    \node [left of=encoder, node distance=2cm] (input) {$W$};
    \node [above of=sum1, node distance=1cm] (noise1) {$\tilde{Z}^n$};
    \node [right of=decoder, node distance=2cm] (output1) {$\hat{W}$};
    \node [below of=mult, node distance=1cm] (input2) {$S$};
    \node [right of=detector, node distance=2cm] (output2) {$\hat{S}$};
    \node [below of=sum2, node distance=1cm] (noise2) {$Z^n$};
    \node [below of=detector, node distance=1cm] (output3) {$W$};
    \node [above of=anchor3, node distance=2cm] {\footnotesize\text{\quad\quad\quad\quad   Feedback}};

    % Additional arrows
    \draw [->] (input) -- (encoder);
    \draw [->] (noise1) -- (sum1);
    \draw [->] (input2) -- (mult);
    \draw [->] (noise2) -- (sum2);
    \draw [->] (decoder) -- (output1);
    \draw [->] (detector) -- (output2);
    \draw [->] (output3) -- (detector);

\end{tikzpicture}}
\caption{The considered ISAC setting with feedback.}
\label{fig1}
\end{figure}
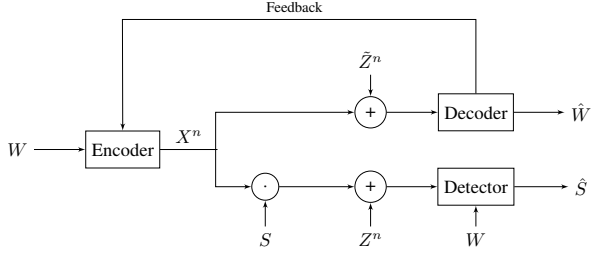
We also assume the utilization of Variable-Length Feedback (VLF) codes, as discussed in \cite{5961844}. It is well known that, even though feedback does not increase capacity, it provides adaptability and significant faster convergence of the error probability to zero compared to fixed blocklength codes. The family of VLF coding schemes is constructed with codewords of very large length, which can be considered infinite for mathematical analysis.

The transmitted variable-length codewords must satisfy the following constraint with $P$ indicating the maximum transmission power (this parameter will be defined in more mathematically convenient forms in the sections that follow):
\begin{align}
\lim_{n \to \infty} \frac{1}{n}\sum_{i=1}^n X_i^2 \leq P.
\end{align}
Hence, the Signal-to-Noise Ratios (SNRs) of the sensing and communication operations are defined as $\text{SNR}_1 \triangleq \frac{P}{\tilde{\sigma}^2}$ and $\text{SNR}_2 \triangleq \frac{P}{\sigma^2}$, respectively. Formally, an $(\ell,M)$ VLF code, where $\ell\in\mathbb{R}^+$ and $M\in\mathbb{Z}^+$, is defined as follows:

\begin{enumerate}
\item A sequence of encoders $f_n:\{1,\ldots,M\} \rightarrow \mathcal{X}$ defining the channel inputs $X_n=f_n(W)$, where $W\in\{1,\ldots,M\}$ is the equiprobable message.
%\begin{align}
%X_n=f_n(W)
%\end{align}
%where $W\in\{1,\ldots,M\}$ is the equiprobable message.
\item A sequence of decoders $g_n:\tilde{\mathcal{Y}}^{n} \rightarrow \{1,\ldots,M\}$ providing the best estimate of $W$ at each channel use $n$.

\item A sequence of detectors $q_n:\{1,\ldots,M\}\times\mathcal{Y}^{n} \rightarrow \mathcal{S}$ providing the best estimate of $S$ at each channel use $n$. 

\item A non-negative integer-valued random variable $\tau$, which represents a stopping time of the filtration $\mathscr{G}_n=\sigma\{Y_1,\ldots,Y_n\}$, for which it holds $\mathbb{E}[\tau]=\ell$.
% and satisfies 
%\begin{align}
%\mathbb{E}[\tau]= \ell
%\end{align}
\end{enumerate}
Essentially, the random variable $\tau$ represents the channel use at which the transmission is completed. Thus, the average decoding error probability is defined as:
\begin{align}
\omega_\ell \triangleq {\rm Pr}[g_\tau(\tilde{Y}_1, \dots, \tilde{Y}_\tau)\neq W],
\end{align}
and the average detection error probability is given by:
\begin{align}
\epsilon_\ell \triangleq {\rm Pr}[q_\tau(W,Y_1, \dots, Y_\tau)\neq S].
\end{align}

We now give the definition of the rate-exponent region which is used to analyze the trade-off between the operations of sensing and communications as in \cite{9731514}. Specifically, this region involves examining the achievable communication rate versus the rate at which the detection error probability converges to zero as the blocklength approaches infinity.
\begin{definition}
A rate-exponent tuple $(R,E)$ is said to be achievable if there exists an $(\ell,M)$ VLF code for which it holds $\lim_{\ell\to +\infty} \omega_\ell=0$ as well as:
\begin{align}
&\lim_{\ell\to +\infty}\frac{1}{\ell}\log M=R,\\
&\lim_{\ell\to +\infty}\frac{1}{\ell}\log\frac{1}{\epsilon_\ell}=E.
\end{align}
The rate-exponent region $\mathcal{R}$ is the closure of the set of all achievable pairs $(R,E)$, i.e.:
\begin{align}
\mathcal{R} = \text{cl} \{(R,E):(R,E)\text{ is achievable}\}.
\end{align}
\end{definition}

\section{Error Detection Probability}
\label{dter}
The detection error exponent was derived analytically in \cite{9731514} by upper bounding the detection error of the maximum-likelihood detector via the Bhattacharyya bound. There exists, however, an efficient way to evaluate the exact probability of erroneous detection, as will be shown in the sequel. For an arbitrary random variable $S$ and for $i=1,\ldots,n$,  \eqref{eq:tilde_Yi} becomes:
\begin{align}
%\begin{split}
%Y_i &= SX_i+Z_i\,\,\Rightarrow\,\, 
\frac{Y_i}{X_i} = S+\frac{Z_i}{X_i}
%\end{split}
\end{align}
with the left-hand-sided random variable distributed as $\frac{Y_i}{X_i} \sim \mathcal{N}\left(S,\sigma^2/X_i^2\right)$. The maximum likelihood estimator of the mean of normal random variables with different variances is their weighted average with weights being the inverse of their variances \cite[(eq. (4.17)]{bevington2003data}. Therefore, the $n$-dimensional estimation problem can be reduced to one dimension, as:
\begin{align}
Y^\prime &= \frac{\sum_{i=1}^n \frac{X_i^2}{\sigma^2}S+\frac{X_i^2}{\sigma^2}\frac{Z_i}{X_i}}{\sum_{i=1}^n \frac{X_i^2}{\sigma^2}}\Rightarrow \nonumber\\
Y^\prime &= \frac{\sum_{i=1}^n X_i^2S+X_iZ_i}{\sum_{i=1}^n X_i^2}\Rightarrow \nonumber\\
Y^\prime &= S + \frac{\sum_{i=1}^n X_iZ_i}{\sum_{i=1}^n X_i^2}\Rightarrow \nonumber\\
Y^\prime &= S + \frac{\sum_{i=1}^n V_i}{\sum_{i=1}^n X_i^2} \text{, where $V_i\sim\mathcal{N}\left(0,X_i^2\sigma^2\right)$}\Rightarrow \nonumber\\
Y^\prime &= S + \frac{B}{\sum_{i=1}^n X_i^2}\text{, where $B\sim\mathcal{N}\left(0,\sum_{i=1}^nX_i^2\sigma^2\right)$}\Rightarrow \nonumber\\
Y^\prime &= S + Z^\prime \text{, where $Z^\prime\sim\mathcal{N}\left(0,\frac{\sigma^2}{\sum_{i=1}^nX_i^2}\right)$.}
\end{align}

For the binary state detection discussed in \cite{9731514}, we have $S\in\{0,1\}$. Since, the Gaussian probability density function is monotone with distance, a minimum distance detection rule is optimal in terms of minimizing the probability of error. Hence, the average detection error probability at each channel use $n$ is given as follows:
\begin{align}
\begin{split}
\epsilon_n &=  {\rm Pr}[S=0]{\rm Pr}[Z^\prime\geq 0.5] + {\rm Pr}[S=1]{\rm Pr}[Z^\prime\leq-0.5]\\
	&=  {\rm Pr}(Z^\prime\leq -0.5)\\
	&= 	\frac{1}{2}\left(1+\erf\left(-\frac{0.5}{\sqrt{\frac{2\sigma^2}{\sum_{i=1}^nX_i^2}}}\right)\right)\\
	&= 	\frac{1}{2}\erfc\left(\frac{0.5}{\sqrt{\frac{2\sigma^2}{\sum_{i=1}^nX_i^2}}}\right) = \Q\left(\frac{\sqrt{\sum_{i=1}^nX_i^2}}{2\sigma}\right),
\end{split}
\end{align}
implying that this probability can be derived in terms of the Gaussian $\Q$-function, depending only on the transmitted power of the channel input signal. A direct application of the Chernoff bound on the $\Q$-function yields the upper bound presented in~\cite{9731514}, since for $x\geq 0$ holds:
\begin{align}
\Q(x) \leq e^{-\frac{x^2}{2}}.
\label{chernoff}
\end{align}

The resulting upper bound on the probability of erroneous detection is shown to provide a derivation of a tight achievable error exponent in \cite{9731514}. Nevertheless, a much shorter proof based on the properties of the $\Q$-function can be obtained as follows. Let $\phi(\cdot)$ be the probability density function of the standard normal distribution. Using \cite{10.1214/aoms/1177731721}, it holds:
\begin{align}
\frac{x}{x^2+1}\phi(x) \leq \Q(x) \leq \frac{1}{x}\phi(x).
\end{align}
By directly evaluating the exponents of the upper and lower bounds, we can deduce the exponent of the $\Q$-function, as both are equal. This indicates that the exponent is given by (the analytical derivation will be included in the extended journal version of this paper):
\begin{align}\label{eq:E}
\begin{split}
E &= -\lim_{n\to+\infty}\frac{1}{n}\log \Q\left(\frac{\sqrt{\sum_{i=1}^nX_i^2}}{2\sigma}\right)\\
  & = \frac{1}{8}\text{SNR}_2 \log e.
\end{split}
\end{align}

As we are particularly interested in the achievable rate-exponent region, we shall utilize the Chernoff bound in \eqref{chernoff} for the remainder of the paper. This bound shares the same exponent as the $\Q$-function, and its simplicity is convenient for the derivations that follow.

\section{The Rate-Exponent Region}
In this section, we present novel analytical expressions for three achievable rate-exponent regions for our ISAC setting. Specifically, we discuss the corner points of the region and two distinct transmit signal distributions that define achievable regions.

\subsection{Corner Points}
\label{cornpoi}
The codebook of a variable-length code consists of infinite-length codewords. Given this setting, two extreme cases can be considered that define two corner points on the achievable rate-exponent region. The sensing-optimal point is achieved with constant-power codewords, as previously discussed. However, the only way to transmit information and ensure that each symbol provides a constant increment to the transmitted power is by using binary PAM. The maximum achievable rate of this binary-input Gaussian channel can be obtained as~\cite{durisi20-11a}:
\begin{align}
\begin{split}
R_{s} &= \frac{1}{\sqrt{2 \pi}}\int_{-\infty}^\infty e^{\frac{-x^2}{2}}\left(1- \log\left(1+e^{-2\text{SNR}_1-2x\sqrt{\text{SNR}_1}}\right)\right)\text{d}x
\end{split}
\label{bigaus}
\end{align}
with the sensing-optimal exponent given via \eqref{eq:E} as $E_{s}= \frac{1}{8}\text{SNR}_2 \log e$. 
%\begin{align}
%E_{S}= \frac{1}{8}\text{SNR}_2 \log e.
%\end{align}
In contrast, for the communication-optimal point, the capacity of the Gaussian channel is achieved when the channel inputs follow the Gaussian distribution, i.e.:
\begin{align}
R_{c} =\frac{1}{2}\log(1+\text{SNR}_1).
\end{align}
To derive the exponent of this communication-optimal point, let $P=1$ without loss of generality. Then, the transmitted power follows a $\chi^2$-distribution, and an upper bound on the probability of detection error is given by:
\begin{align}
\epsilon_n &= \int_0^\infty f_{\chi^2}(x;n) \Q\left(\frac{\sqrt{x}}{2\sigma}\right)\text{d}x \nonumber\\
		&\stackrel{(a)}{\leq} \int_0^\infty f_{\chi^2}(x;n) e^{-\frac{x}{8\sigma^2}}\text{d}x \nonumber \\
		%&=\int_0^\infty \frac{1}{2^{\frac{n}{2}}\Gamma\left(\frac{n}{2}\right)}x^{\frac{n}{2}-1}e^{-\frac{x}{2}}e^{-\frac{x}{8\sigma^2}}\text{d}x \nonumber \\
		&=\int_0^\infty \frac{1}{2^{\frac{n}{2}}\Gamma\left(\frac{n}{2}\right)}x^{\frac{n}{2}-1}e^{-\frac{x}{2}-\frac{x}{8\sigma^2}}\text{d}x \nonumber \\
		&\stackrel{(b)}{=}\int_0^\infty \frac{1}{2^{\frac{n}{2}}\Gamma\left(\frac{n}{2}\right)}\left(\frac{t}{1+\frac{1}{4\sigma^2}}\right)^{\frac{n}{2}-1}e^{-\frac{t}{2}}\frac{\text{d}t}{1+\frac{1}{4\sigma^2}} \nonumber \\
		&=\left(1+\frac{1}{4\sigma^2}\right)^{-\frac{n}{2}} \int_0^\infty \frac{1}{2^{\frac{n}{2}}\Gamma\left(\frac{n}{2}\right)}t^{\frac{n}{2}-1}e^{-\frac{t}{2}}\text{d}t \nonumber \\
		&\stackrel{(c)}{=}\left(1+\frac{1}{4\sigma^2}\right)^{-\frac{n}{2}}=2^{-n\frac{1}{2}\log\left(1+\frac{1}{4}\text{SNR}_2\right)},
\label{comopt}
\end{align}
where $(a)$ is due to the Chernoff bound in (\ref{chernoff}), $(b)$  is obtained by setting $x = t\big/\left(1+\frac{1}{4\sigma^2}\right)$, and $(c)$ follows from noting that the integrand is the probability density function $f_{\chi^2}(t;n)$, integrated over its support.
Hence, the achievable exponent for the communication-optimal point is:
\begin{align}
E_{c} = \frac{1}{2}\log\left(1+\frac{1}{4}\text{SNR}_2\right).
\end{align}

These two corner points define a time-sharing achievable region, which is numerically evaluated in a later section.
%In Figure \ref{corn}, the derived corner points are presented. Any point between the corners is achievable via time sharing. 
%
% 
%\begin{figure*}
%\centering
%\includegraphics[scale=.7]{figures/corners}
%\caption{Achievable rate-exponent region based on the extreme corner points. For this setting, the $\text{SNR}_1$ and $\text{SNR}_2$ are set to 10 dB.} 
%\label{corn}
%\end{figure*}

\subsection{Connecting the Corner Points: Gaussian Mixture Signals}
\label{con1}
Here, we discuss a natural way to define an achievable region by connecting the corner points without time-sharing. Specifically, we study the case where the channel input is the summation of a binary PAM signal and a Gaussian signal, resulting in a Gaussian mixture. %The following theorem is the main result of the section.
\begin{theorem}
For any $a\in[0,\infty)$, there exists an achievable rate-exponent tuple $(R^*,E^*)$ such that:
\begin{align}
\begin{split}
R^*&=\frac{1}{2}\log\left(1+\frac{\text{SNR}_1}{1+a}\right)+\frac{1}{\sqrt{2 \pi}}\int_{-\infty}^\infty e^{\frac{-x^2}{2}}\\
  &\quad\quad\times\left(1 - \log\left(1+e^{-2\frac{a\text{SNR}_1}{1+a+\text{SNR}_1}-2x\sqrt{\frac{\text{SNR}_1}{1+a+\text{SNR}_1}}}\right)\right)\text{d}x,\\
E^*&=\frac{1}{2}\log\left(1+\frac{\text{SNR}_2}{4+4a}\right)+\frac{a\text{SNR}_2}{8+8a+2\text{SNR}_2}\log e.
\end{split}
\end{align}
\end{theorem}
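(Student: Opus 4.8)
The plan is to realize the channel input as a superposition $X_i = X_i^{(B)} + X_i^{(G)}$, where the $X_i^{(B)}$ are binary-PAM symbols taking the values $\pm\sqrt{P_B}$ equiprobably, the $X_i^{(G)}\sim\mathcal{N}(0,P_G)$ are Gaussian, everything is independent across $i$, and the powers are split as $P_G = P/(1+a)$ and $P_B = aP/(1+a)$, so that $P_G+P_B=P$ and $a$ is the ratio of PAM power to Gaussian power (letting $a\to 0$ and $a\to\infty$ recovers the two corner points of Section \ref{cornpoi}). I would draw a VLF codebook i.i.d.\ from this Gaussian-mixture input distribution $P_X$ (whose per-symbol power is $P$, so the constraint holds) and invoke the VLF achievability construction of \cite{5961844}: with the usual information-density stopping rule the rate $I(X;\tilde{Y})$ is achievable with $\omega_\ell\to 0$, $\mathbb{E}[\tau]=\ell\to\infty$, and $\tau/\ell\to 1$.

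For $R^*$ I would evaluate $I(X;\tilde{Y})$ by the chain rule. Since $(X^{(B)},X^{(G)})\to X\to\tilde{Y}$ is Markov and $X=X^{(B)}+X^{(G)}$, one has $I(X;\tilde{Y}) = I(X^{(B)},X^{(G)};\tilde{Y}) = I(X^{(G)};\tilde{Y}\mid X^{(B)}) + I(X^{(B)};\tilde{Y})$. Conditioned on $X^{(B)}$, the channel is a plain Gaussian channel for the Gaussian input $X^{(G)}$ against noise $\tilde{Z}$, contributing $\frac{1}{2}\log(1+P_G/\tilde{\sigma}^2)=\frac{1}{2}\log(1+\frac{\text{SNR}_1}{1+a})$. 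The term $I(X^{(B)};\tilde{Y})$ is the capacity of a binary-input Gaussian channel whose effective noise is $X^{(G)}+\tilde{Z}\sim\mathcal{N}(0,P_G+\tilde{\sigma}^2)$, i.e.\ a binary-input channel with effective SNR $\gamma_1 = P_B/(P_G+\tilde{\sigma}^2) = \frac{a\,\text{SNR}_1}{1+a+\text{SNR}_1}$; writing this capacity in the same integral form as \eqref{bigaus}, with $\gamma_1$ as the signal-to-noise ratio, gives exactly the integral appearing in the statement, and adding the two contributions yields $R^*$.

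For $E^*$ I would first restrict the stopping time to satisfy $\tau\ge m:=\lceil(1-\delta)\ell\rceil$ for an arbitrary $\delta>0$; because the VLF stopping time concentrates at $\ell$, replacing $\tau$ by $\max(\tau,m)$ inflates $\mathbb{E}[\tau]$ by only a factor $1+o(1)$, so $R^*$ is unchanged, and it keeps $\omega_\ell\to 0$. Given the transmitted message $W$ (known to the detector), the codeword $X^\tau$ is fixed, so the minimum-distance analysis of Section \ref{dter} together with the Chernoff bound \eqref{chernoff} bounds the conditional detection error by $e^{-\frac{1}{8\sigma^2}\sum_{i=1}^{\tau}X_i^2}$. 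Averaging over the random codebook and the noise, and using $\tau\ge m$ with $X_i^2\ge 0$ to discard the tail of the sum,
\[
\epsilon_\ell \le \mathbb{E}\!\left[e^{-\frac{1}{8\sigma^2}\sum_{i=1}^{\tau}X_i^2}\right] \le \mathbb{E}\!\left[e^{-\frac{1}{8\sigma^2}\sum_{i=1}^{m}X_i^2}\right] = \left(\mathbb{E}\!\left[e^{-\frac{1}{8\sigma^2}X_1^2}\right]\right)^{m},
\]
the last step using that the entries of the transmitted codeword are i.i.d.\ $P_X$ (no independence between $\tau$ and the codeword is needed, only monotonicity of $\sum_i X_i^2$). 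Now $X_1^2/P_G$ is a non-central chi-squared random variable with one degree of freedom and non-centrality parameter $P_B/P_G = a$, so its moment-generating function gives $\mathbb{E}[e^{-X_1^2/(8\sigma^2)}] = (1+\frac{P_G}{4\sigma^2})^{-1/2}\exp(-\frac{aP_G/(8\sigma^2)}{1+P_G/(4\sigma^2)})$; substituting $P_G=P/(1+a)$ and simplifying shows $-\log\mathbb{E}[e^{-X_1^2/(8\sigma^2)}] = E^*$. Hence $\frac{1}{\ell}\log\frac{1}{\epsilon_\ell}\ge\frac{m}{\ell}E^*\to(1-\delta)E^*$, and letting $\delta\downarrow 0$ makes $E^*$ achievable; together with $R^*$ and $\omega_\ell,\epsilon_\ell\to 0$, the tuple $(R^*,E^*)$ is achievable.

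The step I expect to be the main obstacle is interfacing the sensing analysis, which is naturally a fixed-blocklength statement, with the variable-length framework: one must control $\sum_{i=1}^{\tau}X_i^2$ for a random stopping time that is correlated both with the codeword and with the communication decoder's behavior. Imposing a deterministic minimum transmission length and exploiting monotonicity of the partial sums, as above, sidesteps any need for a joint concentration argument, but it does require checking that the minimum-length constraint is compatible with the VLF code definition (it is, since the maximum of a constant and a stopping time is again a stopping time) and that it leaves the achievable rate unchanged. A secondary point is to justify the VLF achievability of $I(X;\tilde{Y})$ for this particular non-Gaussian, Gaussian-mixture input distribution, which follows from running the random-coding-with-stopping-rule construction of \cite{5961844} with $P_X$ held fixed rather than optimized.
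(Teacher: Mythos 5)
Your proposal is correct and follows essentially the same route as the paper: a superposition input of binary PAM plus Gaussian, the rate obtained by splitting $I(X;\tilde Y)$ into the Gaussian part conditioned on the PAM part plus a binary-input Gaussian channel term (the paper phrases this as successive-cancellation decoding, which is the same chain-rule decomposition), and the exponent obtained from the Chernoff bound on the $\Q$-function combined with the non-central $\chi^2$ law of $\sum_i X_i^2$ --- your per-symbol MGF factorization is exactly the paper's $n$-fold non-central $\chi^2$ integral written as a product. The only addition is your truncation argument $\tau\ge(1-\delta)\ell$ interfacing the fixed-blocklength sensing bound with the VLF stopping time, a point the paper leaves implicit by working directly with $\epsilon_n$.
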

\begin{proof}
Without loss of generality, assume that each channel input is the summation of a Gaussian signal that follows $\mathcal{N}(0,1)$ and a binary PAM symbol from $\{-\sqrt{a},\sqrt{a}\}$, for any $a\in[0,\infty)$. Then, the SNRs of the operations of communication and sensing are given respectively by:
\begin{align}
\text{SNR}_1 = \frac{1+a}{\tilde{\sigma}^2},\,\,
\text{SNR}_2 = \frac{1+a}{\sigma^2}.
\end{align}

To evaluate the achievable rate, we utilize expression (\ref{bigaus}) in a setting of successive cancellation decoding since our approach is essentially a superposition coding technique \cite{Gamal_Kim_2011}. Specifically, the PAM symbols are decoded assuming the Gaussian component of the channel input is noise. Then, the Gaussian signal is decoded by subtracting the already decoded PAM symbols. The aggregate rate of this decoding scheme is known to be equivalent to the direct evaluation of the mutual information, hence:
\begin{align}
\begin{split}
R^* &= \frac{1}{2}\log\left(1+\frac{1}{\tilde{\sigma}^2}\right)+\frac{1}{\sqrt{2 \pi}}\int_{-\infty}^\infty e^{\frac{-x^2}{2}}\\
&\quad\quad\quad\quad\quad\quad\times\left[1 - \log\left(1+e^{-2\frac{a}{\tilde{\sigma}^2+1}-2x\sqrt{\frac{a}{\tilde{\sigma}^2+1}}}\right)\right]\text{d}x\\
  &= \frac{1}{2}\log\left(1+\frac{\text{SNR}_1}{1+a}\right)+\frac{1}{\sqrt{2 \pi}}\int_{-\infty}^\infty e^{\frac{-x^2}{2}}\\
  &\quad\quad\times\left[1 - \log\left(1+e^{-2\frac{a\text{SNR}_1}{1+a+\text{SNR}_1}-2x\sqrt{\frac{\text{SNR}_1}{1+a+\text{SNR}_1}}}\right)\right]\text{d}x.
\end{split}
\end{align}

The distribution of the sum of squares of this channel input follows a non-central $\chi^2$-distribution with the probability density function $f_{\chi^2_{NC}}(x; n, \lambda)$, where $n$ denotes the degrees of freedom and $\lambda$ is the non-centrality parameter. Then, by setting $\lambda = an$, we derive (\ref{makr}) (top of next page). 
\begin{figure*}
\rule[1ex]{\textwidth}{0.1pt}
%\scalebox{0.69}{\parbox{1\linewidth}{%
\resizebox{0.9\linewidth}{!}{
  \begin{minipage}{\linewidth}
\begin{align}
\begin{split}
\epsilon_n &= \int_0^\infty f_{\chi^2_{NC}}(x;n,\lambda) \Q\left(\frac{\sqrt{x}}{2\sigma}\right)\text{d}x\\
		&\stackrel{(a)}{\leq} \int_0^\infty f_{\chi^2_{NC}}(x;n,\lambda) e^{-\frac{x}{8\sigma^2}}\text{d}x %= \int_0^\infty \frac{1}{2} e^{-\frac{x+\lambda}{2}}\left(\frac{x}{\lambda}\right)^{\frac{n}{4}-\frac{1}{2}} I_{\frac{n}{2}-1}\left(\sqrt{\lambda x}\right) e^{-\frac{x}{8\sigma^2}}\text{d}x\\		&
  = \int_0^\infty \frac{1}{2} e^{-\frac{x+\lambda}{2}-\frac{x}{8\sigma^2}}\left(\frac{x}{\lambda}\right)^{\frac{n}{4}-\frac{1}{2}} I_{\frac{n}{2}-1}\left(\sqrt{\lambda x}\right)\text{d}x\\
		&\stackrel{(b)}{=} %\int_0^\infty \frac{1}{2} e^{-\frac{t+\lambda}{2}}\left(\frac{t}{\lambda(1+\frac{1}{4\sigma^2})}\right)^{\frac{n}{4}-\frac{1}{2}} I_{\frac{n}{2}-1}\left(\sqrt{\frac{\lambda}{1+\frac{1}{4\sigma^2}}t}\right)\frac{\text{d}t}{1+\frac{1}{4\sigma^2}}\\
%		&= 
  \left(\frac{1}{1+\frac{1}{4\sigma^2}}\right)^{\frac{n}{4}+\frac{1}{2}} \int_0^\infty \frac{1}{2} e^{-\frac{t+\lambda}{2}}\left(\frac{t}{\lambda}\right)^{\frac{n}{4}-\frac{1}{2}} I_{\frac{n}{2}-1}\left(\sqrt{\frac{\lambda}{1+\frac{1}{4\sigma^2}}t}\right)\text{d}t\\
		&\stackrel{(c)}{=} %\left(\frac{1}{1+\frac{1}{4\sigma^2}}\right)^{\frac{n}{4}+\frac{1}{2}}\frac{e^{-\frac{\lambda}{2}}}{e^{-\frac{\lambda^\prime}{2}}} \int_0^\infty \frac{1}{2} e^{-\frac{t+\lambda^\prime}{2}}\left(\frac{t}{\lambda^\prime(1+\frac{1}{4\sigma^2})}\right)^{\frac{n}{4}-\frac{1}{2}} I_{\frac{n}{2}-1}\left(\sqrt{\lambda^\prime t}\right)\text{d}t\\
	%	&= 
  \left(\frac{1}{1+\frac{1}{4\sigma^2}}\right)^{\frac{n}{2}}e^{\frac{\lambda^\prime}{2}-\frac{\lambda}{2}} \int_0^\infty \frac{1}{2} e^{-\frac{t+\lambda^\prime}{2}}\left(\frac{t}{\lambda^\prime}\right)^{\frac{n}{4}-\frac{1}{2}} I_{\frac{n}{2}-1}\left(\sqrt{\lambda^\prime t}\right)\text{d}t\\
		%%&= \left(1+\frac{1}{4\sigma^2}\right)^{-\frac{n}{2}}\exp\left(\frac{\lambda^\prime}{2}-\frac{\lambda}{2}\right) \int_0^\infty f_{\chi^2_{nc}}(t;n,\lambda^\prime)\text{d}t\\
		&\stackrel{(d)}{=} \left(1+\frac{1}{4\sigma^2}\right)^{-\frac{n}{2}}e^{-\lambda\left(\frac{1}{2}-\frac{1}{2+\frac{1}{2\sigma^2}}\right)} = \left(1+\frac{1}{4\sigma^2}\right)^{-\frac{n}{2}}e^{-\frac{\lambda}{8\sigma^2+2}} = \left(1+\frac{\text{SNR}_2}{4(1+a)}\right)^{-\frac{n}{2}}e^{-n\frac{a\text{SNR}_2}{8+8a+2\text{SNR}_2}}\\
		&= 2^{-n\left(\frac{1}{2}\log\left(1+\frac{\text{SNR}_2}{4+4a}\right)+\frac{a\text{SNR}_2}{8+8a+2\text{SNR}_2}\log e\right)}
\end{split}
\label{makr}
\end{align}
 \end{minipage}
}
%}}
\rule[1ex]{\textwidth}{0.1pt}
\end{figure*}
In this derivation, $(a)$ is due to the Chernoff bound in (\ref{chernoff}), $(b)$  is obtained by setting $x = t\big/\left(1+\frac{1}{4\sigma^2}\right)$, $(c)$ is due to the assignment $\lambda = \lambda^\prime\left(1+\frac{1}{4\sigma^2}\right)$, and $(d)$ follows from noting that the integrand is the probability density function $f_{\chi^2_{NC}}(t; n, \lambda^\prime)$, integrated over its support. Hence, the achievable exponent is: 
\begin{align}
E^* = \frac{1}{2}\log\left(1+\frac{\text{SNR}_2}{4+4a}\right)+\frac{a\text{SNR}_2}{8+8a+2\text{SNR}_2}\log e.
\end{align}
\end{proof}

\subsection{Connecting the Corner Points: A Novel Signal Distribution}
\label{con2}
A major challenge in analytically evaluating an achievable region is to derive the detection error exponent in closed form, as numerical approximations can be inaccurate or converge slowly. In the previous section, we addressed this challenge by utilizing well-known results and distributions. In the same spirit, this section presents an achievable rate-exponent region based on a channel input distribution that leads to an efficient evaluation of the detection error exponent. 

We begin by defining the channel input distribution. Let $U$, $G$, and $X$ be random variables, where $U$ is a binary random variable with ${\rm Pr}(U = -1) = {\rm Pr}(U = 1) = 0.5$, $G$ follows the $\chi$-distribution with $k$ degrees of freedom, and $X \triangleq UG$.
%\begin{align}
%X = U\cdot G.
%\end{align}
Hence, the probability density function of $X$ is given by: 
\begin{align}
\begin{split}
f_X(x;k) &= \begin{cases} 
      \frac{1}{2}\frac{1}{2^{\frac{k}{2}-1}\Gamma\left(\frac{k}{2}\right)}\lvert x\rvert^{k-1}e^{-\frac{x^2}{2}}, & x < 0 \\
      \frac{1}{2}\frac{1}{2^{\frac{k}{2}-1}\Gamma\left(\frac{k}{2}\right)}x^{k-1}e^{-\frac{x^2}{2}}, & x \geq 0
   \end{cases}\\
    &=\frac{1}{2^{\frac{k}{2}}\Gamma\left(\frac{k}{2}\right)}\lvert x\rvert^{k-1}e^{-\frac{x^2}{2}},
\end{split}
\label{pdfdchi}
\end{align}
where $\mathbb{E}[X]=0$ due to the independence of the variables $U$ and $G$, and the variance of $X$ is obtained as:
%The mean of this distribution is
%\begin{align}
%\begin{split}
%\mathbb{E}[X]&=\mathbb{E}[U\cdot G]\\
%			&\stackrel{(a)}{=}\mathbb{E}[U]\cdot\mathbb{E}[G]\\
%			&=0,
%\end{split}
%\end{align}
%where $(a)$ is due to the independence of $U$ and $G$. Hence, the variance the distribution is 
\begin{align}
\begin{split}
\sigma_X^2 &= \mathbb{E}[X^2]-\mathbb{E}[X]^2 = \mathbb{E}[U^2G^2] = \mathbb{E}[G^2]\stackrel{(a)}{=} k
\end{split}
\end{align}
with $(a)$ resulting from the fact that $G^2$ follows the $\chi^2$-distribution. The relationship $X^2=G^2$ reveals the motivation behind the definition of $X$. We have defined a zero-mean distribution that is suitable for communications, with a well-defined distribution for the transmitted power $\sum_{i=1}^nX_i^2$, as the sum of $\chi^2$ random variables follows a $\chi^2$-distribution; we next denote this distribution as $\mathcal{J}(k)$ with parameter $k$ indicating the degrees of freedom.

With the channel input distribution defined, we now present the main result of this part in the following theorem.\begin{theorem}
For any $k\in\mathbb{Z}^+$, there exists an achievable rate-exponent tuple $(R^\dag,E^\dag)$ such that:
\begin{align}
\begin{split}
R^\dag&=h(X+\tilde{Z})-\frac{1}{2}\log\left(\frac{2\pi e k}{\text{SNR}_1}\right),\\
E^\dag&=\frac{k}{2}\log\left(1+\frac{\text{SNR}_2}{4k}\right),
\end{split}
\end{align}
where $X\sim\mathcal{J}(k)$ and $\tilde{Z}\sim\mathcal{N}\left(0,k/\text{SNR}_1\right)$.
\end{theorem}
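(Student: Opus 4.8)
The plan is to mirror the structure of the proof of the previous theorem: derive the achievable rate as a mutual information $I(X;X+\tilde{Z})$ for the specified input $X\sim\mathcal{J}(k)$, and derive the detection error exponent from the distribution of the transmitted power $\sum_i X_i^2$, which by construction is exactly $\chi^2$ with $nk$ degrees of freedom. For the rate, I would first normalize so that the power constraint is $P$; since $\sigma_X^2=k$, I rescale to $X'=X\sqrt{P/k}$ and write the achievable rate as the mutual information between input and output of the communication channel $\tilde{Y}=X'+\tilde{Z}_i$ with $\tilde{Z}_i\sim\mathcal{N}(0,\tilde\sigma^2)$. Here $\tilde{\sigma}^2 = P/\text{SNR}_1$, and after normalizing power to $P=1$ (or equivalently absorbing the scaling into $\tilde{Z}$ so that $\tilde Z\sim\mathcal{N}(0,k/\text{SNR}_1)$, matching the theorem statement), I expand $I(X;X+\tilde{Z}) = h(X+\tilde{Z}) - h(X+\tilde{Z}\mid X) = h(X+\tilde{Z}) - h(\tilde{Z})$, and $h(\tilde{Z}) = \tfrac12\log(2\pi e\,\tilde\sigma^2) = \tfrac12\log(2\pi e k/\text{SNR}_1)$, which yields exactly the claimed $R^\dag$. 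That this mutual information is achievable over the Gaussian channel with VLF codes follows from the standard variable-length feedback coding results of \cite{5961844}: any rate below the mutual information of a fixed input distribution is achievable with vanishing error probability, and the average-power constraint is respected in the limit because $\frac1n\sum_i (X_i')^2\to\mathbb{E}[(X')^2]=P$ almost surely by the law of large numbers.

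For the exponent, I would invoke the key structural observation already made in the text: $X_i^2=G_i^2$ where each $G_i^2$ is $\chi^2(k)$, so $\sum_{i=1}^n X_i^2\sim\chi^2(nk)$ (after normalizing $P=1$; the general case rescales the argument of the $\Q$-function accordingly). Then I apply exactly the computation in \eqref{comopt}: starting from $\epsilon_n = \int_0^\infty f_{\chi^2}(x;nk)\,\Q\!\left(\sqrt{x}/(2\sigma)\right)\mathrm{d}x$, bound $\Q$ by the Chernoff bound \eqref{chernoff} to get $\int_0^\infty f_{\chi^2}(x;nk)\,e^{-x/(8\sigma^2)}\mathrm{d}x$, substitute $x=t/(1+\tfrac{1}{4\sigma^2})$ to recognize a $\chi^2(nk)$ density integrated over its support, and obtain $\epsilon_n\le\left(1+\tfrac{1}{4\sigma^2}\right)^{-nk/2}$. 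Writing $\sigma^2 = 1/\text{SNR}_2$ (since $P=1$) gives $\epsilon_n\le 2^{-n\cdot(k/2)\log(1+\text{SNR}_2/(4k))}$, hence $E^\dag = \tfrac{k}{2}\log(1+\tfrac{\text{SNR}_2}{4k})$. I should also check that the stopping time $\tau$ used by the communication code does not degrade this exponent — i.e.\ that $\epsilon_\ell = \Pr[q_\tau(W,Y^\tau)\neq S]$ still decays with exponent $E^\dag$ in $\ell=\mathbb{E}[\tau]$; this follows because $\tau$ concentrates around $\ell$ and the per-blocklength bound is monotone decreasing in $n$, so conditioning on $\tau\ge(1-\delta)\ell$ and a union bound over the rare event $\tau<(1-\delta)\ell$ suffices, letting $\delta\to0$.

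The step I expect to require the most care is making the dependence of the detection rule on the feedback/stopping time rigorous: the detector observes $\tau$ (a function of $Y_1,\dots,Y_\tau$) together with the true message $W$, and I need the error probability $\Pr[q_\tau(W,Y^\tau)\neq S]$ to be controlled by the fixed-$n$ bound evaluated at $n\approx\ell$. Since $X^n=f^n(W)$ is deterministic given $W$, conditioning on $W$ fixes the codeword, and conditioned on the codeword the reduction to the one-dimensional estimate $Y' = S + Z'$ with $Z'\sim\mathcal{N}(0,\sigma^2/\sum_{i=1}^\tau X_i^2)$ from Section~\ref{dter} still applies; the subtlety is that $\tau$ is not independent of $\{X_i,Z_i\}$, but because the minimum-distance rule's error depends only on $\sum_{i=1}^\tau X_i^2$ and this sum is stochastically increasing in $\tau$, the worst case is $\tau$ small, which is the rare event handled above. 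A secondary, purely technical obstacle is that $h(X+\tilde{Z})$ has no elementary closed form — but the theorem statement deliberately leaves it as $h(X+\tilde{Z})$, so no further simplification is needed, and its finiteness is immediate since $X+\tilde{Z}$ has finite variance $k/\text{SNR}_1 + k = k(1+1/\text{SNR}_1)\cdot$(appropriate scaling).
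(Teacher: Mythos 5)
Your proposal is correct and follows essentially the same route as the paper's proof: the rate is obtained from $I(X;X+\tilde{Z})=h(X+\tilde{Z})-h(\tilde{Z})$ with $h(\tilde{Z})=\frac{1}{2}\log(2\pi e k/\text{SNR}_1)$, and the exponent from the fact that $\sum_{i=1}^n X_i^2\sim\chi^2(nk)$ followed by the Chernoff bound and the change of variables $x=t/(1+\tfrac{1}{4\sigma^2})$, exactly as in the paper's analogue of \eqref{comopt}. One bookkeeping slip to fix: you cannot simultaneously take $P=1$, $\sigma^2=1/\text{SNR}_2$, and $\sum_i X_i^2\sim\chi^2(nk)$ --- with that normalization the bound $\left(1+\tfrac{1}{4\sigma^2}\right)^{-nk/2}$ would give exponent $\tfrac{k}{2}\log\left(1+\tfrac{\text{SNR}_2}{4}\right)$; the consistent choice (the paper's) is to leave $X\sim\mathcal{J}(k)$ unscaled so that $\mathbb{E}[X^2]=k$ and $\text{SNR}_2=k/\sigma^2$, i.e.\ $\tfrac{1}{4\sigma^2}=\tfrac{\text{SNR}_2}{4k}$, which yields the claimed $E^\dag$. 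Your additional care about the stopping time $\tau$ and the detector's dependence on it goes beyond the paper, which simply evaluates $\epsilon_n$ per blocklength $n$ and identifies the exponent; that extra argument is welcome but not part of the paper's proof.
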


\begin{proof}
Without loss of generality, we define the SNRs of the communication and sensing operations for any $k \in \mathbb{Z}^+$ as:
\begin{align}
\text{SNR}_1 = \frac{k}{\tilde{\sigma}^2},\,\,\text{SNR}_2 = \frac{k}{\sigma^2}.
\end{align}
Notice that the input of the decoder is of the form $Y = X+\tilde{Z}$ with $\tilde{Z}\sim\mathcal{N}(0,\tilde{\sigma}^2)$, hence, the achievable rate is defined as:
\begin{align}
R^\dag &= I(Y;X) = h(Y)-h(Y|X) = h(X+\tilde{Z})-h(\tilde{Z})\nonumber\\
  &= h(X+\tilde{Z})-\frac{1}{2}\log(2\pi e\tilde{\sigma}^2)\nonumber\\
  &=h(X+\tilde{Z})-\frac{1}{2}\log\left(\frac{2\pi e k}{\text{SNR}_1}\right)
\label{entr}
\end{align}
with $I(\cdot;\cdot)$ being the mutual information, and $h(\cdot)$ and $h(\cdot|\cdot)$ is the entropy and the conditional entropy, respectively.
\begin{figure*}
\centering
    \subfloat[$\text{SNR}_1 =\text{SNR}_2 = 10$ dB.]{\includegraphics[width=0.34\linewidth]{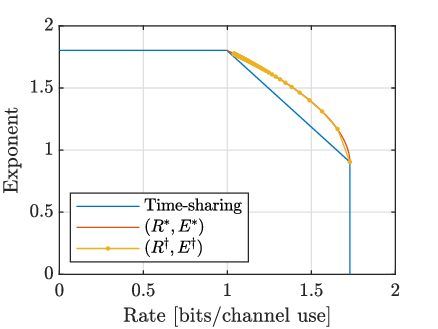} \label{numan2}}\,
    \subfloat[$\text{SNR}_1 =\text{SNR}_2 = 15$ dB.]{\includegraphics[width=0.34\linewidth]{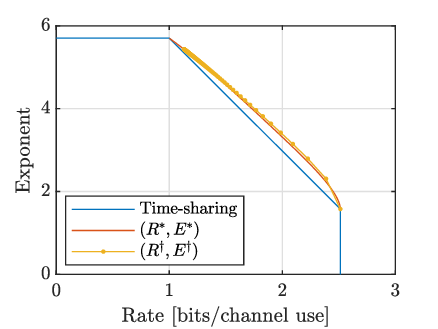} \label{numan3}}\,
    \subfloat[$\text{SNR}_1 =\text{SNR}_2 = 20$ dB.]{\includegraphics[width=0.34\linewidth]{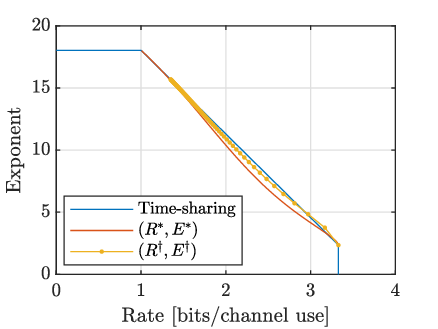} \label{numan4}}\,
    \subfloat[$\text{SNR}_1 =\text{SNR}_2 = 25$ dB.]{\includegraphics[width=0.34\linewidth]{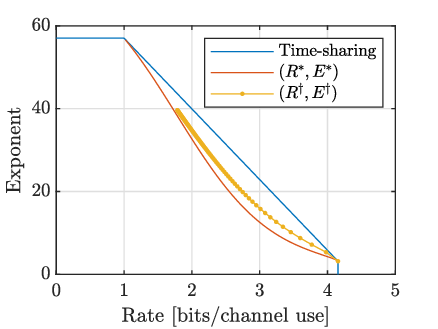} \label{numan5}}
%    \vspace{0pt}
\caption{Achievable rate-exponent regions for various SNR values for the communication ($\text{SNR}_1$) and sensing ($\text{SNR}_2$) operations.}
\label{numan}
%\vspace{-10pt}
\end{figure*}

For the detection error exponent, we provide a derivation similar to that in (\ref{comopt}), based on the fact that the sum of $n$ squared random variables with probability density function (\ref{pdfdchi}) follows the $\chi^2$-distribution with $nk$ degrees of freedom.
\begin{align}
\begin{split}
\epsilon_n &= \int_0^\infty f_{\chi^2}(x;nk) \Q\left(\frac{\sqrt{x}}{2\sigma}\right)\text{d}x\\
		&\stackrel{(a)}{\leq} \int_0^\infty f_{\chi^2}(x;nk) e^{-\frac{x}{8\sigma^2}}\text{d}x \\
		%&=\int_0^\infty \frac{1}{2^{\frac{nk}{2}}\Gamma\left(\frac{nk}{2}\right)}x^{\frac{nk}{2}-1}\exp\left(\frac{-x}{2}\right)\exp\left(\frac{-x}{8\sigma^2}\right)\text{d}x \\
		&=\int_0^\infty \frac{1}{2^{\frac{nk}{2}}\Gamma\left(\frac{nk}{2}\right)}x^{\frac{nk}{2}-1}e^{-\frac{x}{2}-\frac{x}{8\sigma^2}}\text{d}x \\
		&\stackrel{(b)}{=}\int_0^\infty \frac{1}{2^{\frac{nk}{2}}\Gamma\left(\frac{nk}{2}\right)}\left(\frac{t}{1+\frac{1}{4\sigma^2}}\right)^{\frac{nk}{2}-1}e^{-\frac{t}{2}}\frac{\text{d}t}{1+\frac{1}{4\sigma^2}}\\
		%&=\left(1+\frac{1}{4\sigma^2}\right)^{-\frac{nk}{2}} \int_0^\infty \frac{1}{2^{\frac{nk}{2}}\Gamma\left(\frac{nk}{2}\right)}t^{\frac{nk}{2}-1}e^{\frac{-t}{2}}\text{d}t \\
		&\stackrel{(c)}{=}\left(1+\frac{1}{4\sigma^2}\right)^{-\frac{nk}{2}}\\
		&=2^{-n\frac{k}{2}\log\left(1+\frac{\text{SNR}_2}{4k}\right)},
\end{split}
\end{align}
where $(a)$ is due to the Chernoff bound in (\ref{chernoff}), $(b)$  is obtained by setting $x = t\big/\left(1+\frac{1}{4\sigma^2}\right)$, and $(c)$ follows from noting that the integrand is the probability density function $f_{\chi^2}(t;nk)$, integrated over its support. Hence, the achievable exponent is:
\begin{align}
E^\dag = \frac{k}{2}\log\left(1+\frac{\text{SNR}_2}{4k}\right).
\end{align}
\end{proof}

%\begin{remark}
%The probability density function of $X+\tilde{Z}$ is the convolution of (\ref{pdfdchi}) with the normal density function,
%\begin{align}
%f_{X+\tilde{Z}}(x) = \int_{-\infty}^{\infty} f_X(x-t;k)  f_{\tilde{Z}}(t;0,\tilde{\sigma}^2)\text{d}t.
%\end{align}
%Using this equation, the differential entropy in (\ref{entr}) can be numerically evaluated in an efficient manner.
%\end{remark}

\section{Numerical Results}
\label{neval}
In this section, we numerically evaluate the previously derived analytical expressions  for the achievable rate-exponent region for ISAC with VLF coding for different SNR values. We have particularly considered that both $\text{SNR}_1$ (communication) and $\text{SNR}_2$ (sensing) are set equal to $\{10,15,20,25\}$ dB. For the region $(R^\dag,E^\dag)$ derived for the presented input distribution in Section~\ref{con2}, parameter $k$ for the degrees of freedom of the $\chi^2$-distribution was assigned the values $\{1,2,\ldots,80\}$, since it was found to converge slowly, especially at higher SNRs, making numerical integration challenging.

As illustrated in Fig.~\ref{numan}, at SNR values below $10$ dB, the regions $(R^*,E^*)$ and $(R^\dag,E^\dag)$ are virtually equivalent, wherever the latter is defined. This is not the case at higher SNRs as $(R^\dag,E^\dag)$ becomes larger. However, as SNR increases, both regions become smaller than the time-sharing region defined by the corner points. Notice that a time-sharing region can be defined between any achievable point. For example, in Fig.~\ref{numan4}, a time-sharing region can be defined between $(R^\dag,E^\dag)$ and the sensing-optimal corner point that outperforms the original time-sharing region. Finally, as depicted in Fig.~\ref{numan5} at SNRs greater than $20$ dB, time-sharing between the extreme corner points is the best option. A potential explanation of this phenomenon is that, when the SNR is very large, the noise does not influence the channel output enough so as to approximately follow a Gaussian distribution. Hence, the loss in the achievable rate performance due to sub-optimal signaling becomes greater.

\section{Conclusion and Future Work}
\label{concl}
This paper presented novel analytic results for the achievable rate-exponent region of an ISAC system that utilizes variable-length codes with feedback. The nature of such coding schemes results in a trade-off that does not exist in the fixed blocklength case. The main motivation for considering variable-length codes was their improved non-asymptotic performance compared to fixed-length codes, since the communication error probability exhibits faster convergence to zero. We presented novel expressions for the rate-exponent tuple for both Gaussian mixture input signals and a channel input distribution that leads to an efficient evaluation of the detection error exponent. The numerical evaluation of the derived analytical formulas unveiled that the advantage offered by variable-length codes comes at the cost of a reduced achievable rate-exponent region when sensing is also considered.

Extensions of the present work could examine settings with rate-distortion regions, as in \cite{8437621} and \cite{9785593}. Interestingly, the Blahut-Arimoto algorithm used in these works produces channel input distributions very similar to those in Theorems 1 and 2 of this paper \cite{10471902}. Additionally, we intend to derive a generalized version of Theorem 2 for any $k\in[0,+\infty)$.

\bibliography{ISAC_exponent}
\bibliographystyle{ieeetr}

\end{document}